\documentclass[11pt]{article}

\usepackage[letterpaper,margin=1in]{geometry}
\usepackage[T1]{fontenc}
\usepackage[hyphens]{url}
\usepackage{microtype}
\usepackage{mathtools}
\usepackage{amssymb}
\usepackage{amsthm}
\usepackage[
  colorlinks=true,
  citecolor=blue,
  linkcolor=blue,
  urlcolor=blue
]{hyperref}
\usepackage[nameinlink,noabbrev]{cleveref}

\usepackage[
  style=authoryear,
  backend=biber,
  giveninits=true,
  uniquename=init,
  natbib=true,
  maxbibnames=99,
  minbibnames=99
]{biblatex}

\DeclareMathOperator{\supp}{support}
\DeclareMathOperator{\dist}{dist}

\theoremstyle{plain}
\newtheorem{theorem}{Theorem}

\newtheorem{corollary}[theorem]{Corollary}

\title{Two Observations on Metric Distortion and Condorcet Winning Sets}
\author{Jannik Peters\\
Shanghai University of Finance and Economics\\
\texttt{jannikpeters2512@gmail.com}}
\date{}

\begin{document}

\maketitle

\begin{abstract}
    In this research note we briefly connect two well-studied topics in computational social choice: metric distortion and the selection of undominated committees. In particular, we show that undominated committees are (in some sense) both necessary and sufficient to achieve bi-criteria metric distortion below $3$ \citep{BCC26a}. First, we show that any $\alpha$-undominated committee with $\alpha \le 0.5 - \Omega(1)$ has a bi-criteria metric distortion strictly of $3 - \Omega(1)$. In particular, this implies that a committee of size $5$ with distortion at most $2.7384$ exists. Secondly, we show that if a committee has a bi-criteria metric distortion strictly of $3 - \Omega(1)$, then it must also be $1 - \Omega(1)$-undominated. 
\end{abstract} 

\section{Introduction}
In this note, we show that two active lines of work in computational social choice are closely connected: metric distortion and the selection of undominated committees. 

Metric distortion tries to quantify the loss in social welfare due to voting rules not having access to full voter utilities, but only to ordinal rankings. In the metric distortion model, a set of voters and a set of candidates are lying in a common metric space. Voters prefer candidates that are close to them, and our goal is to select the candidate with the lowest social cost, that is, the lowest sum distances over all voters. However, we assume that we are only given the ordinal preferences of the voters, and not their actual distances. The distortion of a voting rule (using purely ordinal preferences), is the worst case approximation to the social cost, which this voting rule achieves. In this model, it is known that no deterministic voting rule can have a distortion of better than $3$ \citep{ABE+18a}, while there are voting rules achieving $3$ \citep{GHS20a, KiKe22a}. Recent work has started to investigate settings in which one can improve upon the lower bound of $3$. In particular, as shown by \citet{CRWW24b} randomized voting rules can achieve an (expected) distortion of strictly below $3$. This was further recently extended by \citet{CGRW26a} who showed that to get a distortion of less than $3$ a randomized voting rule with a constant size support is sufficient (with it being an open question as to how large the support of a voting rule needs to be to achieve distortion below $3$). Related to this, \citet{BCC26a} study a variant in which the distortion is not measured with regard to a single candidate, but for a set of candidates, with the distance of a voter to this set being the distance to their closest candidate in the set. One open question they leave in their work is whether one can always achieve a distortion of less than $3$ using constantly many candidates.

Another topic that has recently gained significant attention is that of undominated committees (and relatedly of Condorcet winning sets) \citep{ELS15a, CLP+24a}. For a fraction $\alpha \in [0,1]$ a set of candidates is called $\alpha$-undominated, if at most an $\alpha$ fraction of the voters can prefer a single candidate outside of the committee to everyone inside the committee. It is known that $\mathcal{O}(\frac{1}{k})$-undominated committees of size $k$ always exist \citep{CLP+24a, NSL26a}. An important question is for which values of $k$ a $\frac{1}{2}$-undominated committee, a so-called Condorcet winning committee exists (with such an existence currently being known for $k \ge 5$ \citep{NSL26a}). Currently all known ways of constructing a $1 - \Omega(1)$-undominated committee of a constant size are inherently non-constructive (using for instance Kakutani's fixed point theorem or von Neumann's minimax theorem), prompting \citeauthor{CLP+24a} to ask: ``are
there efficient and practical algorithms for finding Condorcet winning sets (or more generally,
$\alpha$-undominated sets)?''

\subsection{Our Contribution}
We show that these two problems are inherently connected. Namely, we show that any $0.5 - \Omega(1)$-undominated committee also has a bi-criteria metric distortion strictly of $3 - \Omega(1)$. As a consequence, since such committees exist for $k \ge 5$, one can achieve a bi-criteria metric distortion of strictly better than $3$ for $k \ge 5$. In particular, using the construction of \citet{NSL26a} one can achieve a distortion of $2.7384$ with a committee of size $5$.

Secondly, we show that any committee with a bi-criteria metric distortion of $3 - \Omega(1)$ it must also be $1 - \Omega(1)$-undominated. This also further extends to the support of randomized voting rules achieving a metric distortion of $3 - \Omega(1)$. Hence, if one wants to find either a candidate or a random voting rule with a distortion of $3 - \Omega(1)$, one necessarily needs to also construct an undominated committee. In particular, constructing such a set of constant size or a rule with constant size support in a simple or constructive 
manner would answer the open question of \citet{CLP+24a}.

\section{Model and Notation}
Throughout, we assume that we are given a set of voters $N = \{1, \dots, n\}$ and a set of candidates $C = \{c_1, \dots, c_m\}$. Each voter $i \in N$ possesses a complete ordinal preference ranking $\succ_i$ over $C$. Together $\mathcal{I} = (N, C, (\succ_i)_{i \in N})$ form an election instance. For a given set $W \subseteq C$ of candidates and candidate $c \notin W$ we denote by $M(W, c) \coloneqq \lvert \{i \in N \mid \exists c' \in W\colon c' \succ_i c \}\rvert$ the \emph{margin} of $W$ over $c$.

We study two questions in this paper: the selection of low metric distortion candidates and the selection of undominated committees.

\paragraph{Metric Distortion.}
Given an election instance $(N, C, (\succ_i)_{i \in N})$ we say that a function $d\colon (N \cup C) \times (N \cup C) \to \mathbb{R}_{\ge 0}$ is a \emph{compatible pseudo-metric} if
\begin{itemize}
    \item $d(x,x) = 0$ for all $x \in N \cup C$; 
    \item $d(x,y) = d(y,x)$ for all $x,y \in N \cup C$; 
    \item $d(x,y) \le d(x,z) + d(z,y)$ for all $x,y,z \in N \cup C$; 
    \item $d(i,c_1) \le d(i, c_2)$ if $c_1 \succ_i c_2$ for all $i \in N$ and $c_1, c_2 \in C$.
\end{itemize} For a given pseudo-metric $d$, voter $i \in N$ and set of candidates $C' \subseteq C$ we write $d(i,C') = \min_{c \in C'} d(i,c)$.

Given a pseudo-metric $d$ we say that a subset of candidates $C' \subseteq C$ has a distortion of $\lambda$ if 
\(
\sum_{i \in N} d(i,C') \le \lambda \sum_{i \in N} d(i, c)
\) for all $c \in C$. For a given election instance $\mathcal{I} = (N, C, (\succ_i)_{i \in N})$, we say a subset $C' \subseteq C$ has a metric distortion of $\dist(C')$ if the distortion of $C'$ for every compatible pseudo-metric $d$ (with $\mathcal{I}$) is $\dist(C')$.

\paragraph{Undominated Committees.}
Let $W\subseteq C$ be a subset of the candidates. For $\alpha > 0$ we say that $W$ is \emph{$\alpha$-undominated} if $M(W,c) \ge (1 - \alpha)n$ for all $c \notin W$.

\section{Main Results}

We begin with a positive connection between undominated committees and bi-criteria metric distortion. For this, we generalize the well-known connection between the metric distortion of a candidate and its maximum margin in a given election. 

\begin{theorem}[{\citep[Lemma~6]{ABE+18a}}]
    Let $c \in C$ be any candidate. Then 
    \[
    \dist(c) \le \max_{c' \in C \setminus \{c\}} \frac{2n}{M(c, c')} - 1.
    \]
\end{theorem}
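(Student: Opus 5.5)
We fix an arbitrary compatible pseudo-metric $d$ and an arbitrary competitor $c' \in C \setminus \{c\}$. It suffices to show
\[
\sum_{i \in N} d(i,c) \le \Bigl(\frac{2n}{M(c,c')} - 1\Bigr) \sum_{i \in N} d(i,c').
\]
Taking the maximum over $c'$ and then the worst case over $d$ gives the theorem. The case $c' = c$ is trivial, since the ratio is then $1$ and the right-hand side is at least $1$ because $M(c,c') \le n$.

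Write $S = \{i \in N \mid c \succ_i c'\}$, so that $|S| = M(c,c')$. Compatibility gives $d(i,c) \le d(i,c')$ for every $i \in S$. The triangle inequality $d(i,c) \le d(i,c') + d(c,c')$ holds for every $i$. Summing over all voters, we obtain
\[
\sum_{i \in N} d(i,c) \le \sum_{i \in N} d(i,c') + n\, d(c,c').
\]
It remains to bound $d(c,c')$ by the cost of $c'$, and only the voters in $S$ are used for this. For each $i \in S$, the triangle inequality and compatibility give
\[
d(c,c') \le d(i,c) + d(i,c') \le 2 d(i,c').
\]
Averaging over $S$ yields
\[
d(c,c') \le \frac{2}{|S|}\sum_{i \in S} d(i,c') \le \frac{2}{M(c,c')}\sum_{i \in N} d(i,c').
\]
Substituting this bound gives $\sum_i d(i,c) \le \bigl(1 + \tfrac{2n}{M(c,c')}\bigr)\sum_i d(i,c')$. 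This has the right shape, but the constant is $+1$ rather than $-1$, so the argument must be sharpened.

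The main obstacle is recovering the $-1$, and the voters in $S$ must be treated separately to do so. For $i \in S$ we keep the tighter bound $d(i,c) \le d(i,c')$ instead of paying $d(c,c')$. For $i \notin S$ we use $d(i,c) \le d(i,c') + d(c,c')$. This gives
\[
\sum_i d(i,c) \le \sum_i d(i,c') + (n - |S|)\, d(c,c').
\]
Inserting $d(c,c') \le \frac{2}{|S|}\sum_{i\in S} d(i,c')$ gives a coefficient of $1 + \frac{2(n-|S|)}{|S|} = \frac{2n}{|S|} - 1$ in front of $\sum_i d(i,c')$. This is exactly the claimed bound.

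I would double-check two points. First, the step $\sum_{i\in S} d(i,c') \le \sum_{i\in N} d(i,c')$ is valid because distances are nonnegative. Second, the case $M(c,c') = 0$ makes the bound infinite and hence vacuous.
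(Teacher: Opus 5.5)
Your proof is correct and is essentially the paper's argument: the paper cites this lemma and proves its generalization \Cref{thm:metric_margin} with the same split, where voters in $S$ pay at most $d(i,c')$, voters outside $S$ pay at most $d(i,c') + d(c',W)$, and $d(c',W)$ is bounded by averaging $2d(j,c')$ over $S$; taking $W=\{c\}$ gives your proof. The preliminary $+1$ detour is unnecessary but harmless, and your explicit treatment of the competitor $c'=c$ covers a case the paper leaves implicit.
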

In particular, this shows that Condorcet winners have a distortion of at most $3$ as $M(c, c') \ge \frac{n}{2}$ since they beat every other candidate in a pairwise comparison. Here, we show that this result generalizes to sets of candidates as well.
\begin{theorem}
    Let $W \subseteq C$ be any set of candidates. Then
    \[
    \dist(W) \le \max_{c' \in C \setminus W} \frac{2n}{M(W, c')} - 1.
    \]
    \label{thm:metric_margin}
\end{theorem}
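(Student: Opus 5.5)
Fix a compatible pseudo-metric $d$ and a candidate $c \in C$. We must show $\sum_i d(i,W) \le \lambda \sum_i d(i,c)$ with $\lambda = \max_{c' \notin W} \frac{2n}{M(W,c')} - 1$. If $c \in W$ this is immediate, since $d(i,W) \le d(i,c)$ for every voter and $\lambda \ge 1$ (because $M(W,c') \le n$). So assume $c \notin W$ and write $M = M(W,c)$ and $S = \{i \in N \mid \exists w \in W\colon w \succ_i c\}$, so that $|S| = M$. It suffices to prove
\[
\sum_{i\in N} d(i,W) \le \left(\frac{2n}{M} - 1\right)\sum_{i \in N} d(i,c),
\]
which mirrors the single-candidate argument of \citep[Lemma~6]{ABE+18a}. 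The one change is that each voter in $S$ prefers \emph{some} member of $W$ to $c$, but not necessarily the same one.

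\textbf{Step 1: a common anchor.} Set $D = \min_{i \in S} d(i,W)$. For $i \in S$ there is a $w_i \in W$ with $w_i \succ_i c$, so compatibility gives $d(i,W) \le d(i,w_i) \le d(i,c)$. Hence
\[
D \le \frac{1}{M}\sum_{i \in S} d(i,W) \le \frac{1}{M}\sum_{i\in S} d(i,c).
\]
Let $i^* \in S$ attain the minimum and let $w^* \in W$ be closest to $i^*$. Then $d(i^*,w^*) = D \le d(i^*,c)$, and the triangle inequality gives $d(c,w^*) \le d(c,i^*) + d(i^*,w^*) \le 2\,d(i^*,c)$.

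\textbf{Step 2: bound every voter through $w^*$.} For any voter $j$ we have $d(j,W) \le d(j,w^*) \le d(j,c) + d(c,w^*)$. This yields $\sum_j d(j,W) \le \sum_j d(j,c) + n\, d(c,w^*)$, so we need to control $d(c,w^*)$ by an average over $S$.

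\textbf{Step 3: refine the bound on $d(c,w^*)$.} This is the main obstacle. Bounding by $d(i^*,c)$ uses only the single voter $i^*$, and this bound alone is not averaged. The fix is to route through an arbitrary $i \in S$ instead:
\[
d(c,w^*) \le d(c,i) + d(i,W) + \big(\text{distance between } i\text{'s closest member of } W \text{ and } w^*\big),
\]
but the last term is the difficulty, since different voters in $S$ may be closest to different members of $W$. I would avoid it by not fixing $w^*$. For each $j$, pick $i \in S$ and let $w_i'$ be $i$'s closest member of $W$. Then
\[
d(j,W) \le d(j,c) + d(c,i) + d(i,w_i') \le d(j,c) + 2\,d(i,c).
\]
Averaging over $i \in S$ gives $d(j,W) \le d(j,c) + \frac{2}{M}\sum_{i\in S} d(i,c)$ for every $j$.

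\textbf{Step 4: sharpen by splitting the voters.} Summing the Step 3 bound over all $j$ gives only the constant $1 + \frac{2n}{M}$, so the voters must be treated separately. For $j \in S$ use the direct bound $d(j,W) \le d(j,c)$. Apply the averaged bound only to $j \notin S$, of which there are $n - M$. Writing $A = \sum_{i\in S} d(i,c)$ and $B = \sum_{i\notin S} d(i,c)$, we get
\[
\sum_j d(j,W) \le A + B + (n-M)\frac{2}{M}A = B + \left(\frac{2n}{M} - 1\right)A.
\]
Since $B \le \left(\frac{2n}{M}-1\right)B$ (again because $\frac{2n}{M} - 1 \ge 1$), the total is at most $\left(\frac{2n}{M}-1\right)\sum_j d(j,c)$. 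Maximizing over $c \notin W$ then gives the statement.
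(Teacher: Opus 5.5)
Your final argument (Steps 3--4) is correct and is essentially the paper's proof. The paper makes the same split into $S$ and $N\setminus S$, uses the direct bound on $S$ and the triangle inequality through the outside candidate on $N\setminus S$, and resolves your ``common anchor'' issue with the intermediate quantity $d(c',W)\le d(c',c_j)\le 2d(j,c')$ for every $j\in S$, averaged over $S$, which is your per-$i$ routing through $w_i'$ with the minimum over $W$ taken once. You can drop Steps 1--2, since $i^*$ minimizes $d(i,W)$ rather than $d(i,c)$ and never yields the averaged bound; your explicit handling of $c\in W$ is a small point of care that the paper leaves implicit.
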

\begin{proof}
    Let $d$ be any compatible pseudo-metric and let $c' \in C \setminus W$ be any candidate outside of $W$. Further, let
    $S \coloneqq \{i \in N \mid \exists w \in W\colon w \succ_i c'\}$ be the set of voters
    preferring some member of $W$ to $c'$. By definition,
    $\lvert S \rvert = M(W, c')$. For every voter $j \in S$, let $c_j \in W$ be their closest candidate in $W$. By the triangle inequality, we therefore have
    for every $j \in S$ that
    \[
        d(c', W) \le d(c', c_j) \le d(c', j) + d(j, c_j)
        \le 2d(j, c').
    \] Thus, by averaging over all $j \in S$, we get $d(c', W) \le \frac{2}{M(W, c')}\sum_{j \in S} d(j,c') \le \frac{2}{M(W, c')}\sum_{j \in N} d(j,c')$. 
    Now, for every voter $i \in S$ we have $d(i,W) \le d(i, c_i) \le d(i,c')$, while for every voter $i \in N \setminus S$ the
    triangle inequality gives $d(i, W) \le d(i, c') + d(c', W)$. Hence, we get
    \begin{align*}
        \sum_{i \in N} d(i, W)
        &\le \sum_{i \in N} d(i, c') + \lvert N \setminus S \rvert \cdot d(c', W)\\
        &\le \sum_{i \in N} d(i, c')
            + \left(n - M(W,c')\right) \frac{2}{M(W,c')} \sum_{i \in N} d(i, c')\\
        &= \left(\frac{2n}{M(W,c')} - 1\right) \sum_{i \in N} d(i, c'). \quad \qedhere
    \end{align*}
\end{proof}
As an immediate corollary we get an upper bound on the distortion of $\alpha$-undominated sets.
\begin{corollary}
Let $\alpha \ge 0$ and $W \subseteq C$ be an $\alpha$-undominated set of candidates. Then $\dist(W) \le 
    \frac{1+\alpha}{1-\alpha}$.
    \label{thm:dist_to_dom}
\end{corollary}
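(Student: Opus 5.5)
The plan is to apply \Cref{thm:metric_margin} directly and turn the $\alpha$-undominated guarantee into a lower bound on every margin $M(W,c')$.

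First, by the definition of $\alpha$-undominated, every $c' \in C \setminus W$ satisfies $M(W,c') \ge (1-\alpha)n$. Before using this, we need $\alpha < 1$; otherwise the right-hand side $\frac{1+\alpha}{1-\alpha}$ is undefined or infinite, and the claim is vacuous. With $\alpha < 1$ the margin $M(W,c')$ is strictly positive, so the quotient in \Cref{thm:metric_margin} is well defined. The function $x \mapsto \frac{2n}{x} - 1$ is decreasing in $x > 0$, so substituting the lower bound gives
\[
\frac{2n}{M(W,c')} - 1 \le \frac{2n}{(1-\alpha)n} - 1 = \frac{2 - (1-\alpha)}{1-\alpha} = \frac{1+\alpha}{1-\alpha}
\]
for every $c' \notin W$. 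Taking the maximum over $c'$ and invoking \Cref{thm:metric_margin} then yields $\dist(W) \le \frac{1+\alpha}{1-\alpha}$.

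One edge case remains: $W = C$. There the maximum is over an empty set, but $d(i,W) \le d(i,c)$ holds for every $c$, so the distortion is at most $1$. This is within the bound because $\frac{1+\alpha}{1-\alpha} \ge 1$.

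There is no real obstacle here. The only points that need care are the monotonicity step and the degenerate cases ($\alpha \ge 1$ and $W = C$), and I would handle these explicitly as described above.
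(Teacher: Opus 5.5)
Your proposal is correct and is the paper's argument: the corollary is stated as an immediate consequence of \Cref{thm:metric_margin}, obtained by inserting the margin bound $M(W,c') \ge (1-\alpha)n$ and simplifying $\frac{2}{1-\alpha} - 1 = \frac{1+\alpha}{1-\alpha}$. Your explicit handling of the degenerate cases $\alpha \ge 1$ and $W = C$ is a harmless addition that the paper leaves implicit.
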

By using the result of \citet{NSL26a} that a $0.465008$-undominated set of size $5$ always exists, we also get that this set must have a distortion of at most $2.7384$.
\begin{corollary}
    In every instance with at least five candidates, there exists a set $W \subseteq C$ of size $5$ with a distortion of $\frac{1 + 0.465008}{1 - 0.465008} \le 2.7384$. 
\end{corollary}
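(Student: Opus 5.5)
The corollary follows by combining \Cref{thm:dist_to_dom} with the existence result of \citet{NSL26a}; the only work is a numerical check. First, we invoke \citet{NSL26a}: in every instance with at least five candidates, there is a set $W \subseteq C$ with $\lvert W \rvert = 5$ that is $\alpha$-undominated for $\alpha = 0.465008$. Unfolding the definition, this means $M(W,c) \ge (1-\alpha)n$ for every $c \notin W$.

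Next, we apply \Cref{thm:dist_to_dom} to this $W$. Equivalently, we plug the margin bound into \Cref{thm:metric_margin}. Since $M(W,c') \ge (1-\alpha)n$ for all $c' \in C \setminus W$, we get
\[
\dist(W) \le \frac{2n}{(1-\alpha)n} - 1 = \frac{2 - (1-\alpha)}{1-\alpha} = \frac{1+\alpha}{1-\alpha}.
\]
If $C \setminus W$ is empty, i.e.\ there are exactly five candidates, then $W = C$. In that case the distortion is trivially $1$, since $d(i,W) \le d(i,c)$ for every $c$, and the claim holds.

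Finally, we carry out the arithmetic with $\alpha = 0.465008$. The numerator is $1+\alpha = 1.465008$ and the denominator is $1-\alpha = 0.534992$. To certify the bound $2.7384$, it suffices to check $1.465008 \le 2.7384 \cdot 0.534992$. The right-hand side is $2.7384 \cdot 0.534992 = 1.46502209\ldots$, which exceeds $1.465008$, so the bound holds.

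No step here is a genuine obstacle. The only points needing care are to confirm that the cited result of \citet{NSL26a} indeed gives a set of size exactly $5$ with that $\alpha$, and to handle the degenerate case $W = C$ separately.
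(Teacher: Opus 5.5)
Your proposal is correct and follows the paper's argument exactly: take the $0.465008$-undominated size-$5$ set guaranteed by \citet{NSL26a} and apply \Cref{thm:dist_to_dom}. Your explicit check $1.465008 \le 2.7384 \cdot 0.534992$ and your separate treatment of the degenerate case $W = C$ fill in details the paper leaves implicit.
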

Further, using the asymptotic bound of \citet{CLP+24a} we also get an asymptotic upper bound distortion achievable using a set of size $k$
\begin{corollary}
    In any instance with at least $k \ge 10$ candidates, there exists a set of $k$ candidates with distortion $1+\frac{19.6434}{(k-9.8217)}$.
    \label{cor:asymp}
\end{corollary}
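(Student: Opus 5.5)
The plan is to combine the asymptotic existence result of \citet{CLP+24a} with \Cref{thm:dist_to_dom}; no new metric argument is needed. I read the bound of \citet{CLP+24a} as guaranteeing, in every instance and for every $k$, an $\alpha_k$-undominated committee of size (at most) $k$ with $\alpha_k = \frac{9.8217}{k}$. I have not checked their paper for this exact constant. It is, however, the value the target expression forces, since $1+\frac{19.6434}{k-9.8217}$ has the form $1+\frac{2c}{k-c}$ with $c = 9.8217$. Their rounding presumably also matches, because $2 \cdot 9.8217 = 19.6434$.

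First, I will make sure the committee has exactly $k$ members. If the guaranteed committee $W$ has fewer than $k$ members, I pad it with arbitrary further candidates, which is possible since there are at least $k$ candidates. Padding preserves $\alpha$-undominatedness. For any $W \subseteq W'$ and $c \notin W'$, every voter who prefers some $w \in W$ to $c$ also prefers some member of $W'$ to $c$, so $M(W',c) \ge M(W,c) \ge (1-\alpha_k)n$. Moreover, the set of outsiders only shrinks.

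Second, I apply \Cref{thm:dist_to_dom} to the resulting size-$k$ committee. This yields
\[
\dist(W') \le \frac{1+\alpha_k}{1-\alpha_k} = 1 + \frac{2\alpha_k}{1-\alpha_k} = 1 + \frac{2\cdot 9.8217/k}{1 - 9.8217/k} = 1 + \frac{19.6434}{k - 9.8217}.
\]
The hypothesis $k \ge 10$ is used exactly here: it ensures $\alpha_k < 1$, so the denominator is positive and the bound is finite.

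The only real obstacle is the first step, namely stating the result of \citet{CLP+24a} in precisely this form. If their theorem is phrased with a different constant, or only for sufficiently large $k$, then the constant $9.8217$ and the threshold $k \ge 10$ have to be read off from their statement. The algebra above stays the same; only $c$ changes.
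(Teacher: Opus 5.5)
Your proposal is correct and takes the same route as the paper. The paper cites \citet[Theorem~2]{CLP+24a} for a $\frac{9.8217}{k}$-undominated set of size $k$ (confirming the constant you inferred), applies \Cref{thm:dist_to_dom}, and does the same algebra. Your padding step, which uses that $M(W,\cdot)$ can only grow under supersets, is a harmless extra detail that the paper leaves implicit.
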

\begin{proof}
    From \citet[Theorem~2]{CLP+24a} we know that a $\frac{9.8217}{k}$-undominated set of size $k$ always exists. Hence, by \Cref{thm:dist_to_dom} there always exists a set of size $k$ with distortion $\frac{1 + \frac{9.8217}{k}}{1 - \frac{9.8217}{k}} = 1+\frac{19.6434}{(k-9.8217)}$.
\end{proof}
\paragraph{Comparison with \citet{NSL25a}.}

In a preliminary version of their paper, \citet{NSL26a} also connected the bi-criteria distortion problem to their Lindahl equilibrium setting \citep{NSL25a}. Their bounds, however, are both worse and significantly more complicated than the bounds obtained here. In particular, they showed that one can achieve a distortion of $1 + \frac{16 \ln(4) \log_2(k)}{k}$ using a set of size $k$. This bound is asymptotically worse than the one of \Cref{cor:asymp} and is only below $3$ for $k \ge 68$. We further note that these results were not part of the final peer reviewed conference paper \citep{NSL26a}.

Finally, we prove our second connection: every set with a metric distortion of $3 - \Omega(1)$ must also be $1 - \Omega(1)$-undominated.
\begin{theorem}
   Let $C' \subseteq C$ be a set of candidates with metric distortion strictly less than $3 - \varepsilon$. Then $C'$ is also $\left(1 - \frac{\varepsilon}{2}\right)$-undominated.
\end{theorem}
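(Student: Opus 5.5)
We argue by contraposition. Suppose $C'$ is not $\left(1-\frac{\varepsilon}{2}\right)$-undominated. Then there is a candidate $c \notin C'$ with $M(C',c) < \frac{\varepsilon}{2} n$. Let $S$ be the voters counted in $M(C',c)$, so $|S| < \frac{\varepsilon}{2}n$. Let $T \coloneqq N \setminus S$ be the voters who rank $c$ above every member of $C'$, so $|T| > \left(1-\frac{\varepsilon}{2}\right)n$. The goal is a compatible pseudo-metric $d$ with $\sum_i d(i,C') \ge (3-\varepsilon)\sum_i d(i,c)$. This would contradict $\dist(C') < 3-\varepsilon$.

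The construction mimics the classical lower bound of $3$ from \citet{ABE+18a}. The intended geometry is as follows. Every voter $i\in T$ is at distance $1$ from every candidate it ranks weakly above $c$ (in particular from $c$ itself), and at distance $3$ from every candidate it ranks below $c$, which includes all of $C'$. Every voter $j\in S$ is placed near $C'$. It is at distance $0$ (or some small $\delta$) from every candidate ranked weakly above its favourite member of $C'$. It is at distance $2$ from the remaining candidates, including $c$.

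Concretely, I would assign these voter--candidate distances, which are monotone in each voter's ranking and so respect compatibility. I would then let $d$ be the shortest-path (pseudo-)metric of the resulting weighted bipartite graph. This gives symmetry and the triangle inequality automatically, and leaves voter--candidate distances only possibly smaller. Granting that the assigned distances survive, the cost of $c$ is at most $|T| + 2|S|$ and the cost of $C'$ is at least $3|T|$. The ratio is
\[
\frac{3|T|}{|T|+2|S|} = \frac{3(n-|S|)}{n+|S|},
\]
which tends to $3$ as $|S|/n \to 0$. Plugging in $|S| < \frac{\varepsilon}{2} n$ and comparing with $3-\varepsilon$ is a routine computation. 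I would then tune the constants, e.g.\ by choosing the $S$-distances more cleverly, so that the bound matches exactly $3-\varepsilon$ with the factor $\frac{\varepsilon}{2}$ in the statement. If the constants do not line up exactly, the argument still gives the qualitative claim that distortion $3-\Omega(1)$ forces $1-\Omega(1)$-undomination.

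The main obstacle is ensuring that taking shortest paths does not shortcut the crucial distances $d(i,w)=3$ for $i \in T$, $w \in C'$. Such a shortcut could run $i \to x \to j \to w$ through a candidate $x$ that is close both to $i$ and to some $S$-voter $j$ that is close to $w$. To handle this, I would first try to make every path between a $T$-voter and $C'$ cross a ``gap'' of length at least $2$. The candidate tiers would be arranged so that any candidate at distance $1$ from a $T$-voter lies at distance at least $2$ from every $S$-voter. This requires $S$-voters to be at distance $2$ from everything they rank at or below $c$. If this fails for candidates ranked between $C'$ and $c$ by $S$-voters, the fallback is to accept a slightly weaker lower bound, say $d(i,C')\ge 3-O(\delta)$, which still yields a distortion above $3-\varepsilon$.
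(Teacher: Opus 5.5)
Your overall strategy is the right one: argue by contraposition and build a bad compatible metric in which the voters $T$ (those ranking $c$ above all of $C'$) are at distance $1$ from $c$ and $3$ from $C'$. The gap is in how you treat the voters in $S$, which breaks the argument in two ways.

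\textbf{The distances do not survive the closure.} Compatibility forces an $S$-voter $j$ at distance $0$ (or $\delta$) from its favourite $w\in C'$ to be at distance at most that from every $x$ it ranks above $w$. Such an $x$ may be ranked above $c$ by a $T$-voter $i$. Take $C'=\{w\}$, $T$-voters ranking $x\succ c\succ w$, and $S$-voters ranking $x\succ w\succ c$. The shortest-path closure then gives
\[
d(i,w)\le d(i,x)+d(x,j)+d(j,w)\le 1+2\delta ,
\]
so for $\delta=0$ the ratio becomes $|T|/(|T|+2|S|)<1$. The ``gap'' you want cannot be arranged, since compatibility gives $d(j,x)\le d(j,w)$. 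Your fallback $d(i,C')\ge 3-O(\delta)$ is also false, because the loss is about $2$, not $O(\delta)$. So your construction does not give even the qualitative claim.

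\textbf{The bound is too weak even if the distances survived.} Write $|S|=sn$. Your ratio $3(1-s)/(1+s)$ is strictly smaller than $3-2s$ for every $s\in(0,2)$. It therefore certifies distortion $\ge 3-\varepsilon$ only when $s\le \varepsilon/(6-\varepsilon)$, not for all $s<\varepsilon/2$. For example, $\varepsilon=0.6$ and $s=0.25$ give $1.8<2.4$. The ``tuning'' you defer is the missing idea, not a routine computation.

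\textbf{The fix.} Do not place $S$-voters near $C'$; treat all voters uniformly, as the paper does. Put every voter at distance $1$ from each candidate it ranks weakly above $c$ and at distance $3$ from each candidate it ranks below $c$. Set all candidate--candidate and voter--voter distances to $2$. This is already a compatible pseudo-metric (the only tight checks are $3\le 1+2$ and $2\le 1+1$), so no closure is needed. Every voter is at distance $1$ from $c$. Each $S$-voter is at distance $1$ from a member of $C'$ it prefers to $c$, and each $T$-voter is at distance $3$ from all of $C'$. Hence $\sum_i d(i,c)=n$ and
\[
\sum_i d(i,C')=|S|+3|T|=n+2|T|>(3-\varepsilon)n .
\]

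This choice is forced within your own template. Let $a=d(j,C')\le b=d(j,c)$ for $j\in S$. Keeping $d(i,w)=3$ requires $3\le d(i,c)+d(c,j)+d(j,w)$, i.e.\ $a+b\ge 2$. The ratio $(3|T|+a|S|)/(|T|+b|S|)$ is increasing in $a$ and decreasing in $b$, so it is maximized at $a=b=1$. Your choice $a=0$, $b=2$ is the worst point on the boundary $a+b=2$.
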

\begin{proof}
    Let $C'$ be a committee that is not $\gamma \coloneqq 1 - \frac{\varepsilon}{2}$-undominated. 
    Thus, there exists a candidate $c \in C \setminus C'$ preferred to $C'$ by at least $\gamma \cdot n$ many voters. 
    \sloppy Now consider the pseudo-metric space with distance $d$ (following the general $(0,1,2,3)$-metric approach of \citet{ChRa22a}) in which for any two candidates $c_1, c_2 \in C$ we have $d(c_1, c_2) = 2$ and for any voter $i \in N$ and candidate $c' \in C$:
    \[
        d(i,c') = \begin{cases}
            1 \text{ if } &c' \succeq_i c\\ 
            3 \text{ if } &c \succ_i c'.
        \end{cases}
    \]
    For distinct voters $i,j \in N$, set $d(i,j)=2$, and set $d(x,x)=0$ for all $x$. We note that this indeed satisfies the triangle-inequality.

    We further get that by definition $\sum_{i \in N} d(i,c) = n$ while \[\sum_{i \in N} d(i,C') = |\{i \in N\mid \exists \hat{c} \in C' \colon  \hat{c} \succ_i c\}| + 3|\{i \in N \mid \forall \hat{c} \in C' \colon  c \succ_i \hat{c}\}| = (3\gamma + 1 - \gamma)n.\] Hence, the distortion of $C'$ is at least $2\gamma + 1 = 3-\varepsilon$.
\end{proof}
This also has a consequence for the ongoing research question of determining which lotteries can beat the distortion bound of $3$.
\begin{corollary}
    Let $\varepsilon > 0$. For any lottery $p$ with metric distortion better than $3-\varepsilon$ the support of $p$ is $1 - \frac{\varepsilon}{2}$-undominated.
\end{corollary}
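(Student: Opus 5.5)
The plan is to prove the contrapositive, reusing the $(0,1,2,3)$-metric built in the proof of the preceding theorem. The one new point is that a lottery's expected cost is an average over its support. Let $p$ be a lottery over $C$, let $S \coloneqq \supp(p)$, and suppose $S$ is \emph{not} $\left(1-\frac{\varepsilon}{2}\right)$-undominated. With $\gamma \coloneqq 1 - \frac{\varepsilon}{2}$, this gives a candidate $c \notin S$ that at least $\gamma n$ voters prefer to every member of $S$. I will show that $p$ has expected distortion at least $3-\varepsilon$, contradicting the hypothesis.

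I take the same pseudo-metric as before: candidates are pairwise at distance $2$, distinct voters are pairwise at distance $2$, and $d(i,c') = 1$ if $c' \succeq_i c$ while $d(i,c') = 3$ if $c \succ_i c'$. The previous proof already notes that this metric is compatible and satisfies the triangle inequality, so I can reuse that. Under it, $\sum_{i} d(i,c) = n$.

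Now fix any $\hat c \in S$. Every voter in the dominating set $D$ (with $|D| \ge \gamma n$) ranks $c$ above all of $S$, and in particular above $\hat c$, so $d(i,\hat c)=3$ for $i \in D$. Every other voter has $d(i,\hat c)\ge 1$. Hence
\[
\sum_i d(i,\hat c) \ge 3\gamma n + (1-\gamma)n = (1+2\gamma)n = (3-\varepsilon)n.
\]
Since this holds for every $\hat c$ in the support, taking the expectation over $\hat c \sim p$ gives $\mathbb{E}_{\hat c\sim p}\bigl[\sum_i d(i,\hat c)\bigr] \ge (3-\varepsilon)\sum_i d(i,c)$. So $p$ has distortion at least $3-\varepsilon$, which proves the claim.

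This argument is slightly stronger than the set-based theorem: it bounds each support element individually rather than $d(i,S)$. That is exactly what a lottery needs, because the lottery pays $d(i,\hat c)$ for the sampled $\hat c$, not the minimum over $S$. There is no real obstacle. The only thing to check is that ``distortion better than $3-\varepsilon$'' is read in the same strict/non-strict sense as in the theorem, so that the lower bound $3-\varepsilon$ rules it out.
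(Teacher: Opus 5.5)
Your proof is correct, but it takes a slightly different route from the paper. The paper does not reopen the lower-bound construction. It argues forward: for every compatible $d$ and every $c'\in C$,
\[
\sum_{i\in N} d(i,\supp(p)) \le \sum_{i\in N}\sum_{c\in\supp(p)} p(c)\,d(i,c) < (3-\varepsilon)\sum_{i\in N} d(i,c').
\]
So $\supp(p)$, viewed as a set, has distortion below $3-\varepsilon$, and the preceding theorem applies as a black box. You instead inline the $(0,1,2,3)$-metric and lower-bound $\sum_i d(i,\hat c)$ separately for each $\hat c$ in the support. Both arguments are valid. The paper's reduction is modular: any lower bound for sets transfers verbatim to lotteries supported on those sets. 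Yours is self-contained.

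Your remark that the per-element bound is ``slightly stronger'' than the set-based theorem is backwards. Since $d(i,S)\le d(i,\hat c)$ for every $\hat c\in S$, a lower bound on $\sum_i d(i,S)$ \emph{implies} the same lower bound on each $\sum_i d(i,\hat c)$. That inequality is exactly what the paper exploits, and it is why no new argument is needed. In this particular metric, $d(i,S)=3$ already holds for every dominating voter.

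Your strict/non-strict worry also disappears. Failing to be $\gamma$-undominated means $M(S,c)<(1-\gamma)n$, so strictly more than $\gamma n$ voters rank $c$ above all of $S$. This gives expected cost strictly greater than $(3-\varepsilon)n$, which rules out distortion at most $3-\varepsilon$ under either reading.
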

\begin{proof}
    Let $p \in \Delta(C)$ be a lottery with expected distortion strictly less than $3-\varepsilon$.
    For every compatible pseudo-metric d and every candidate $c' \in C$,
    \[
    \sum_{i \in N} d(i,\supp(p))
    \le
    \sum_{i \in N}\sum_{c \in \supp(p)} p(c)d(i,c)
    <
    (3-\varepsilon)\sum_{i \in N} d(i,c').
    \]
\end{proof}
In particular, if one wants to find a simple lottery randomizing over two candidates only (an open question pointed out by \citet{CGRW26a} and \citet{CRWW24b}) and beating metric distortion $3$ one would as a consequence find a simple algorithm selecting a constant $1 - O(1)$ undominated committee of size two, an open question posed by \citet{CLP+24a}.

\section{Acknowledgments}
I thank Kangning Wang for feedback on this paper and for encouraging me to write it.

\printbibliography

@article{ABE+18a,
  author = {Elliot Anshelevich and Onkar Bhardwaj and Edith Elkind and John Postl and Piotr Skowron},
  doi = {10.1016/j.artint.2018.07.006},
  journal = {Artificial Intelligence},
  pages = {27--51},
  title = {Approximating Optimal Social Choice under Metric Preferences},
  volume = {264},
  year = {2018}
}

@inproceedings{BCC26a,
  author = {Kiarash Banishashem and Diptarka Chakraborty and Shayan {Chashm Jahan} and Iman Gholami and MohammadTaghi Hajiaghayi and Mohammad Mahdavi and Max Springer},
  booktitle = {Proceedings of the 14th International Conference on Learning Representations},
  note = {Forthcoming.},
  title = {Bi-Criteria Metric Distortion},
  url = {https://openreview.net/pdf?id=QBgHVmvN5S},
  year = {2026}
}

@misc{CGRW26a,
  archiveprefix = {arXiv},
  author = {Ziyi Cai and D. D. Gao and Prasanna Ramakrishnan and Kangning Wang},
  eprint = {2602.08871},
  primaryclass = {cs.GT},
  title = {Distortion of Metric Voting with Bounded Randomness},
  year = {2026}
}

@inproceedings{ChRa22a,
  author = {Moses Charikar and Prasanna Ramakrishnan},
  booktitle = {Proceedings of the 33rd Annual ACM-SIAM Symposium on Discrete Algorithms (SODA)},
  doi = {10.1137/1.9781611977073.116},
  pages = {2986--3004},
  title = {Metric Distortion Bounds for Randomized Social Choice},
  year = {2022}
}

@inproceedings{CLP+24a,
  author = {Moses Charikar and Alexandra Lassota and Prasanna Ramakrishnan and Adrian Vetta and Kangning Wang},
  booktitle = {Proceedings of the 57th Annual ACM Symposium on Theory of Computing (STOC)},
  doi = {10.1145/3717823.3718235},
  institution = {arXiv:2411.03390 [cs.GT]},
  pages = {1590--1601},
  title = {Six Candidates Suffice to Win a Voter Majority},
  year = {2025}
}

@article{CRWW24b,
  author = {Moses Charikar and Prasanna Ramakrishnan and Kangning Wang and Hongxun Wu},
  doi = {10.1145/3689625},
  journal = {Journal of the ACM},
  number = {6},
  pages = {1--33},
  title = {Breaking the Metric Voting Distortion Barrier},
  volume = {71},
  year = {2024}
}

@article{ELS15a,
  author = {Edith Elkind and J{\'e}r{\^o}me Lang and Abdallah Saffidine},
  doi = {10.1007/s00355-014-0853-4},
  journal = {Social Choice and Welfare},
  number = {3},
  pages = {493--517},
  title = {Condorcet winning sets},
  volume = {44},
  year = {2015}
}

@inproceedings{GHS20a,
  author = {Vasilis Gkatzelis and Daniel Halpern and Nisarg Shah},
  booktitle = {Proceedings of the 61st Symposium on Foundations of Computer Science (FOCS)},
  doi = {10.1109/FOCS46700.2020.00134},
  pages = {1427--1438},
  title = {Resolving the Optimal Metric Distortion Conjecture},
  year = {2020}
}

@inproceedings{KiKe22a,
  author = {Fatih Erdem Kizilkaya and David Kempe},
  booktitle = {Proceedings of the 31st International Joint Conference on Artificial Intelligence (IJCAI)},
  doi = {10.24963/ijcai.2022/50},
  pages = {349--355},
  title = {{P}lurality{V}eto: A Simple Voting Rule Achieving Optimal Metric Distortion},
  year = {2022}
}

@unpublished{NSL25a,
  author = {Th{\`a}nh Nguyen and Haoyu Song and Young-San Lin},
  note = {Unpublished manuscript. Available at \url{https://web.archive.org/web/20250529065531/https://web.ics.purdue.edu/~nguye161/few.pdf}},
  title = {A Few Good Choices},
  year = {2025}
}

@inproceedings{NSL26a,
  author = {Haoyu Song and Th{\'a}nh Nguyen and Young-San Lin},
  booktitle = {Proceedings of the 37th Annual ACM-SIAM Symposium on Discrete Algorithms (SODA)},
  doi = {10.1137/1.9781611978971.175},
  pages = {4861--4874},
  title = {A Few Good Choices},
  year = {2026}
}

\end{document}